\newtheorem{theorem}{Theorem}
\newtheorem{prop}{Proposition}
\begin{document}
\title{Concave-Convex PDMP-based sampling}
\author{Matthew Sutton \hspace{.2cm}\\
    Centre for Data Science, Queensland University of Technology\\
    and \\
    Paul Fearnhead \\
    Department of Mathematics and Statistics, Lancaster University\thanks{This research was supported by EPSRC grant EP/R018561}}
  \maketitle
%\author{Matthew Sutton \thanks{} and Paul Fearnhead}
\date{}
\maketitle

\begin{abstract}
Recently non-reversible samplers based on simulating piecewise deterministic Markov processes (PDMPs) have shown potential for efficient sampling in Bayesian inference problems. However, there remains a lack of guidance on how to best implement these algorithms. If implemented poorly, the computational costs of simulating event times can out-weigh the statistical efficiency of the non-reversible dynamics. Drawing on the adaptive rejection literature, we propose the concave-convex adaptive thinning approach for simulating a piecewise deterministic Markov process (CC-PDMP). This approach provides a general guide for constructing bounds that may be used to facilitate PDMP-based sampling. A key advantage of this method is its additive structure - adding concave-convex decompositions yields a concave-convex decomposition. This facilitates swapping priors, simple implementation and computationally efficient thinning. In particular, our approach is well suited to local PDMP simulation where known conditional independence of the target can be exploited for potentially huge computational gains. We provide an R package for implementing the CC-PDMP approach and illustrate how our method outperforms existing approaches to simulating events in the PDMP literature.
\end{abstract}

\section{Introduction}
Monte Carlo methods based on continuous-time Markov processes have shown potential for efficient sampling in Bayesian inference challenges \citep{Goldman2021,fearnhead2018piecewise}. These new sampling algorithms are based on simulating piecewise-deterministic Markov processes (PDMPs). Such processes evolve deterministically between random event times with possibly random dynamics at event times \cite[see][for an introduction to PDMPs]{PDMPDavis}. Monte Carlo methods based on simulating a PDMP with a desired target distribution of interest were first proposed in the computational physics literature where they were motivated as examples of non-reversible Markov processes \citep{peters2012rejection,michel2014generalized}. These ideas transitioned to the statistics community as an alternative to traditional Markov chain Monte Carlo (MCMC) samplers. Popular algorithms in this development include the Bouncy Particle Sampler  \citep{bouchard2018bouncy} and the Zig-Zag sampler \citep{bierkens2019zig} amongst others \citep{vanetti2017piecewise,wu2020coordinate,michel2020forward,bierkens2020boomerang}. 

There is substantial theoretical and empirical evidence to support the claim that non-reversible MCMC samplers offer more efficient sampling  as they reduce the chance of the sampler moving back to areas of the state space that have recently been visited \citep{diaconis2000analysis, bierkens2017piecewise,bouchard2018bouncy}. Samplers based on PDMPs simulate from a density $\pi(\bm{\theta})$ by introducing a velocity component $\bm{v}$ which is used to evolve the position of the state, $\bm{\theta}$, with deterministic dynamics. At stochastic event times the velocity component is updated using a Markov kernel after which the process continues with the new velocity. The main challenge to implementing a PDMP-based sampler is the simulation of the random event times. This involves simulating the next event time in a time-inhomogeneous Poisson process with rate function that depends on the current position of the sampler. 

Simulation of event times in a Poisson process is often facilitated through a method known as thinning \citep{Lewis1979}. The aim of this paper is to provide a general framework to aid practitioners in implementing PDMP-based samplers. Specifically we introduce the concave-convex thinning approach to facilitate simulating a PDMP-based sampler (CC-PDMP). This method can be applied whenever the rate function of the PDMP can be decomposed into the sum of concave and convex functions, and can be used to facilitate thinning from polynomial rate functions, allowing for broad applicability of the method. We discuss the efficiency of the method compared to alternative approaches in the literature for exact sampling of event times. 

Related to our work is the concave-convex adaptive rejection sampler of \cite{Gorur2011} which uses the same upper-bounding approach to construct bounds on the log density within a rejection sampler that draws independent samples from {univariate} density functions. In contrast, by applying this technique within PDMP sampling, the bounds are constructed on the {gradient of the log density} (i.e. the rate) and used within the non-reversible {Markov Process} framework to generate samples from {multivariate} densities. %While both CC-PDMP and concave-convex adaptive rejection sampling use the same bounds construction approach, the proposed sampling schemes are fundamentally different. 
Moreover, the CC-PDMP framework allows for both subsampling and local updating schemes to facilitate these methods for high dimensional sampling. %In comparison concave-convex adaptive rejection sampling may be applied to multivariate targets via Gibbs sampling. However, there are limitations to this, namely poor sampling for correlated components. 
A short comparison between concave-convex adaptive rejection sampling and CC-PDMP for sampling from a univariate density is given in Appendix C. 

Some authors have proposed using automatic but approximate methods for simulating a PDMP-based sampler. Among these approaches, \cite{Cotter2020} propose numerical integration and root-finding algorithms to facilitate simulation. Others have considered using approximate local bounds which can be simulated exactly  \citep{Goldman2021, Goan2021, Pakman2017}. Both of these approaches sacrifice exact sampling of the posterior and involve a trade-off between the computational cost and the approximation of the sampling distribution.  

The paper is organised as follows. In Section 2, we introduce the technical details of simulating from a PDMP-based sampler and the details of some popular samplers. Section 3 reviews the literature on thinning for PDMP-based samplers. We introduce the concave-convex PDMP approach for adaptive thinning in Section 4. Empirical evaluation of CC-PDMP, and comparison to existing methods, is given in Section 5. We conclude with discussion of limitations and extensions that are possible for the method. Code for implementing our method and replicating our results is available at {\texttt{https://github.com/matt-sutton/ccpdmp}}.

\section{Sampling using a PDMP}
\subsection{Piecewise deterministic Markov processes}
%PDMP-based samplers are a recent class of Monte Carlo methods that work by simulate a piecewise-deterministic Markov process (PDMP). 
Before we explore PDMP-based samplers, we first review PDMPs.  A PDMP is a stochastic process, and we will denote its state at time $t$ by $\bm{z}_t$. There are three key components that define the dynamics of a PDMP: 
\begin{enumerate}
\item A set of deterministic dynamics defined as $\bm{z}_{t+s} = \psi(\bm{z}_t, s)$.
\item Event times that are driven by an inhomogeneous Poisson process with rate $\lambda(\bm{z}_t)$.
\item A Markov transition kernel $q(\bm{z}|\bm{z}_{t-})$ where $\bm{z}_{t-} := \lim_{s\uparrow t} \bm{z}_s$.
\end{enumerate}
A PDMP with these specifications will evolve according to its deterministic dynamics between event times and at event times will update according to the Markov transition kernel $q(\bm{z}|\bm{z}_{t-})$. The event times must be simulated from a Poisson process with rate $\lambda(\bm{z}_t)$, which depends on the position of the process at time $t$. The result of simulating this process is a PDMP skeleton $\{(t_k, \bm{z}_{t_k})\}_{k=1}^{n}$ where for $k=1,\ldots,n$, $t_k$ denotes the $k$th event time and we have stored the state of the process immediately after this event time, $\bm{z}_{t_k}$. Given this skeleton it is possible to fill in the values of the state between the events using the deterministic dynamics.

\subsection{PDMP-based samplers}

Assume we wish to construct a PDMP process to sample from a target distribution of interest, $\pi(\bm{\theta})$. Current
PDMP-based samplers are defined on an augmented space $\bm{z} \in \mathcal{E}$, where $\mathcal{E} = \mathcal{X} \times \mathcal{V}$, that can be viewed as having a position, $\bm{\theta}_t$, and a velocity, $\bm{v}_t$. As described below, the dynamics of the PDMP can be chosen so that the PDMP's invariant distribution has the form $\nu(\bm{z}) = \pi(\bm{\theta})p(\bm{v}|\bm{\theta})$, for some conditional distribution of the velocities, $p(\bm{v}|\bm{\theta})$. As a result, the $\bm{\theta}$-marginal of the invariant distribution is the target distribution we wish to sample from. We will consider target distributions of the form
$$\pi(\bm{\theta}) \propto \exp(-U(\bm{\theta}))$$
where $\bm{\theta} \in \mathbb{R}^p$ and $U(\bm{\theta})$ is the associated potential. An important feature of PDMP samplers is that they need to know the target distribution only up to a constant of proportionality.

If we simulate the PDMP, and the process converges to stationarity, then we can use the simulated skeleton of the PDMP to estimate expectations with respect to the target distribution in two ways. Assume we are interested in a Monte Carlo estimate of $\int g(\bm{z})\nu(\bm{z})\mbox{d}\bm{z}$. The first estimator averages over the continuous time path of the PDMP
$$
\frac{1}{t_{n} - t_{1}}\sum_{k=1}^{n-1} \int_{t_{k}}^{t_{k+1}} g(\psi(\bm{z}_{t_k}, s))ds.
$$
This requires that the integral of $g(\psi(\bm{z}_{t_k}, s))$ with respect to $s$ may be computed. The second, simpler, approach is to ``discretise'' the PDMP path. This involves taking an integer $M>0$, defining $s = (t_n-t_1)/M$, calculating the value of the state at discrete-time points, $\bm{z}_{t_1+s},\ldots,\bm{z}_{t_1+Ms}$, and then using  the standard Monte Carlo estimator
$$
\frac{1}{M}\sum_{j=1}^M g(\bm{z}_{t_1 + js}).
$$
The points $\bm{z}_{t_1 + js}$ may be found by evolving the process along its deterministic dynamics for the appropriate event time interval. \\

\subsection{Bouncy particle sampler}

The Bouncy Particle Sampler \citep{bouchard2018bouncy} takes the velocity space $\mathcal{V}$ to be either $\mathbb{R}^p$ or the unit sphere $\mathcal{S}^{p-1}$ and targets an invariant distribution $\nu(\bm{z}) = \pi(\bm{\theta})p(\bm{v})$ where $p(\bm{v})$ is either the standard $p$-dimensional Normal distribution, or a uniform distribution on $\mathcal{S}^{p-1}$. This sampler evolves a PDMP with linear deterministic dynamics $\phi((\bm{\theta}_t, \bm{v}_t), s) = (\bm{\theta}_t + s\bm{v}_t, \bm{v}_t)$ between events. The canonical event rate is given by
$$
\lambda^c(\bm{z}_t) = \max\left\{0, \langle \bm{v}_t, \nabla_{\bm{\theta}}U(\bm{\theta}_t)\rangle\right\}.
$$
At an event time $t$ generated according to this rate the velocity is updated as $\bm{v}_t = R_{\bm{\theta}_{t-}}(\bm{v}_{t-})$ where
$$
R_{\bm{\theta}}(\bm{v}) = \bm{v} - 2\frac{\langle\bm{v},\nabla_{\bm{\theta}} U(\bm{\theta}) \rangle}{\|\nabla_{\bm{\theta}} U(\bm{\theta})\|^2}\nabla_{\bm{\theta}} U(\bm{\theta}).
$$
This update can be interpreted as reflecting the velocity off the hyperplane tangential to the gradient of $U(\bm{\theta})$. Additionally, at random times $t$ an additional event occurs according to a homogeneous Poisson process with rate $\lambda^{\text{ref}}>0$ in which the velocity is refreshed, drawing $\bm{v}_t\sim p(\bm{v})$. Refreshment ensures that the Bouncy Particle Sampler samples the invariant distribution and does not get ``stuck'' on contours of the potential \citep{bouchard2018bouncy}. Formally the Bouncy Particle Sampler has event rate $\lambda(\bm{z}_t) = \lambda^c(\bm{z}_t) + \lambda^{\text{ref}}$ with Markov transition kernel
$$
q(\bm{z}|\bm{z}_{t-}) = \frac{\lambda^c(\bm{z}_{t-})}{\lambda(\bm{z}_{t-})}\delta_{\bm{\theta}_{t-}}(\bm{\theta})\delta_{R_{\bm{\theta}_{t-}}(\bm{v}_{t-})}(\bm{v}) + \frac{\lambda^{\text{ref}}}{\lambda(\bm{z}_{t-})}\delta_{\bm{\theta}_{t-}}(\bm{\theta})p(\bm{v}),$$
where $\delta_a(A)$ denotes the Dirac delta function with point mass at $a$. The Bouncy Particle Sampler is an example of a \textit{global PDMP} as all components of the velocity $\bm{v}$ are changed according to the Markov transition kernel.

\subsection{Zig-Zag sampler}
%Another key PDMP-based sampler is the Zig-Zag sampler . 
The Zig-Zag sampler \citep{bierkens2019zig, Bierkens2021localZZ} takes the space of velocities, $\mathcal{V}$, to be $\{-1,1\}^p$ with $p(\bm{v}) = 2^{-p}$ uniform on this space. The Zig-Zag sampler also uses linear dynamics $\phi((\bm{\theta}_t, \bm{v}_t), s) = (\bm{\theta}_t + s\bm{v}_t, \bm{v}_t)$ between events, but unlike the BPS only a single element of the velocity vector is updated at event times. The overall Zig-Zag process can be viewed as a composition of $p$ \textit{local event rates} where each event rate has a corresponding Markov transition kernel which operates on a single component of the velocity. Specifically, the $i$-th component of the velocity is updated with local event rate 
$$
\lambda_i(\bm{z}_t) = \max\left\{0, v_i\partial_{\theta_i}U(\bm{\theta}_t)\right\}
$$
for $i = 1,...,p$. Simulating an event time in this local framework consists of simulating an event time for each local event rate and then taking the minimum time as the event time $t$ for the overall process. At an event time $t$ triggered by event rate $\lambda_i$ the velocity is updated as $\bm{v}_t = F_i(\bm{v}_{t-})$ where
$$
F_i(\bm{v}) = (v_1,\dots, v_{i-1}, -v_i, v_{i+1}, \dots, v_p)^T.
$$
Once the velocity is updated for component $i$ local event times must be re-simulated for all rates $\lambda_j(\bm{z}_t)$ which are functions of $\theta_i$. This fact can induce massive computational savings when there is conditional independence between the components of $\bm{\theta}$. To see this, consider the extreme case where the target is fully independent across dimensions. In this case each event rate, $\lambda_i(\bm{z}_t)$, will only depend on the $i$-th components of $\bm{\theta}_t$ and $\bm{v}_t$. Thus at each event, it is only the time of the next event for the component whose velocity changes that needs to be recalculated. The computational cost of simulation in this case will scale as $O(1)$ per event as opposed to $O(d)$ for the global Bouncy Particle Sampler. 

\subsection{Global and local PDMP methods}

As described above, we can divide PDMP samplers into two main classes: \textit{global methods} and \textit{local methods}. These PDMPs differ in how they operate on the velocity vector when an event is triggered. Global methods are PDMP-based methods where the transition kernel acts on the entire velocity vector. By contrast, local methods are defined so that the transition kernel only affects a subset of the velocity. In this section we will give a general algorithm for constructing a local PDMP-based sampler \cite[though see also][]{bouchard2018bouncy}. Let $\bm{v}^{[S]}$ denote the sub-vector of $\bm{v}$ with elements indexed by $S\subseteq \{1,\dots,p\}$ and $\bm{v}^{[-S]}$ denote the sub-vector of $\bm{v}$ without the elements indexed in $S$. For some set $S$ define the local kernel as one that satisfies
$$
q^{[S]}_{\bm{\theta}}(\bm{v}|\bm{v}_{t-}) = \delta_{\bm{v}^{[-S]}_{t-}}\left(\bm{v}^{[-S]}\right)q_{\bm{\theta}}^{[S]}(\bm{v}|\bm{v}_{t-}),
$$
that is, $q^{[S]}_{\bm{\theta}}(\bm{v}|\bm{v}_{t-})$ only updates the sub-vector $\bm{v}^{[S]}$. Suppose we have a partition of the components of $\theta_1, ..., \theta_p$ given by $\mathcal{S} = \{S_1, S_2, \dots, S_F\}$. Subset $S_f$ of $\bm{v}$ is updated with local kernel $q^{[S_f]}_{\bm{\theta}}(\bm{v}|\bm{v}_{t-})$ and associated event rate
$$
\lambda_f(\bm{z}_t) = \max\left\{0, \langle \bm{v}^{[S_f]}, \nabla_{\bm{\theta}^{[S_f]}}U(\bm{\theta}_t)\rangle\right\},
$$
for $f = 1,\dots, F$. Simulating a local PDMP with this structure involves simulating an event time for each of the $F$ rates and applying the local transition kernel corresponding to the smallest simulated event time. For the partition $\mathcal{S}$ define $\mathcal{N} = \{N_1,\dots, N_F\}$  with
$$
N_f = \{m \in \{1,\dots, F\} \mid \bm{\theta}^{[S_m]}   \not\!\perp\!\!\!\perp \bm{\theta}^{[S_f]} \}
$$
for $f = 1,\dots, F$ where $\bm{\theta}^{[S_f]}   \not\!\perp\!\!\!\perp \bm{\theta}^{[S_m]}$ denotes that the rate of events associated with $S_m$ depends on  some element of $\bm{\theta}^{[S_f]}$.  When an event is triggered for a local rate $f$ the velocity $\bm{v}^{[S_f]}$ is updated and new event times are simulated for rates which depend on the value of one or more $\theta_i$ for $i \in S_f$. This simulation process is summarised in Algorithm 1. 

In this framework the Zig-Zag sampler is a local PDMP with partition $\mathcal{S} = \{\{1\}, \{2\}, \dots, \{p\}\}$ and local kernels $q^{[S_f]}_{\bm{\theta}}(\bm{v}|\bm{v}_{t-}) = \delta_{F_{f}(\bm{v}_{t-})}(\bm{v})$. The local Bouncy Particle Sampler uses kernel $q^{[S]}_{\bm{\theta}}(\bm{v}|\bm{v}_{t-}) = \delta_{\bm{v}_*}(\bm{v})$ where $\bm{v}_*^{[-S]} = \bm{v}^{[-S]}_{t-}$ and $\bm{v}_{*}^{[S]} = R_{\bm{\theta}}^{[S]}(\bm{v}^{[S]}_{t-})$ with
$$R_{\bm{\theta}}^{[S]}(\bm{v}^{[S]}) = \bm{v}^{[S]} - 2\frac{\langle\bm{v}^{[S]},\nabla_{\bm{\theta}^{[S]}} U(\bm{\theta}) \rangle}{\|\nabla_{\bm{\theta}^{[S]}} U(\bm{\theta})\|^2}\nabla_{\bm{\theta}^{[S]}} U(\bm{\theta}).
$$
If there is a single factor $\mathcal{S} = \{S\}$ containing all indices $S = \{1,\dots, p\}$ we recover the global Bouncy Particle Sampler. 

\begin{tcolorbox}
\textbf{Algorithm 1: Simulating a PDMP with local structure}

\textbf{Inputs:} $t_1 = 0$, $\bm{\theta}_{t_1}, \bm{v}_{t_1}$,  factorisation $(\mathcal{S},\mathcal{N})$ with $\mathcal{S}=\{S_1,S_2, \dots, S_F \}$ and $\mathcal{N} = \{N_1,N_2,\dots, N_F\}$.\\
Sample $\tau_f$ for $f = 1,...,F$ where %initialise $\tau_f$ 
$$
\mathbb{P}(\tau_f \geq t) = \exp\left(-\int_0^t\lambda_f(\bm{\theta}_0+\bm{v}_0s, \bm{v}_0)\text{d}s \right)
$$
For $k = 1,\dots, n$ 
\begin{itemize}
\item[(a)] Let $f^* = \text{argmin}_{f} (\tau_f)$ and update $t_{k+1} = t_k + \tau_{f^*}$
\item[(b)] Update states:
$
\bm{\theta}_{t_{k+1}} = \bm{\theta}_{t_k} + \tau_{f^*}\bm{v}_{t_k}
$
\item[(c)] Update velocity:
$
\bm{v}_{t_{k+1}} = q_{\bm{\theta}_{t_{k+1}}}^{[S_{f^*}]}(\bm{v}_{t_k} |\bm{v})
$
\item[(d)] For $f \in N_{f^*}$ update times, i.e. resample $\tau_f$ where
$$
\mathbb{P}(\tau_f \geq t) = \exp\left(-\int_0^t\lambda_f(\bm{\theta}_{t_{k+1}}+\bm{v}_{t_{k+1}}s, \bm{v}_{t_{k+1}})\text{d}s \right)
$$
\end{itemize}
\textbf{Outputs:} PDMP skeleton $\{(t_k, \bm{\theta}_{t_k}, \bm{v}_{t_k})\}_{k=1}^{n}$
\end{tcolorbox}

\section{Algorithms for event-time simulation}

Simulating the first event time, $\tau$, from a Poisson process with event rate $\lambda(t)$ is equivalent to simulating $u \sim \text{Uniform}[0,1]$ and solving 
$$
\tau =  \text{argmin}_t\left\{-\log(u) =\int_0^t \lambda(\bm{z}_s)ds \right\}.
$$
If the event rate is sufficiently simple this can be done exactly. For example if the rate is constant or linear, then there are analytical solutions for $\tau$ in terms of $u$. If the rate function is convex a solution may be found using numerical methods \citep{bouchard2018bouncy}. For more complex functions, the rate can be simulated by a process known as thinning. This process makes use of Theorem 1.
\begin{theorem}[\cite{Lewis1979}]
Let $\lambda(t)$ and $\bar{\lambda}(t)$ be continuous functions where $\lambda(t) \leq \bar{\lambda}(t)$ for all $0\leq t_{\max}$. Let $\tau_1,\tau_2,..., \tau_n$ be event times of the Poisson process with rate $\bar{\lambda}(t)$ over the interval $[0,t_{\max})$. For $i = 1,...,n$ retain the event times $\tau_i$ with probability $\lambda(\tau_i)/\bar{\lambda}(\tau_i)$ to obtain a set of event times from a non-homogeneous Poisson process with rate $\lambda$ over the interval $(0, t_{\max}]$.
\end{theorem}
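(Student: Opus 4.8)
The plan is to verify that the retained point process satisfies the two defining properties of an inhomogeneous Poisson process with rate $\lambda$: that the numbers of points falling in disjoint subintervals of $(0, t_{\max}]$ are mutually independent, and that the number of points in any subinterval $(a,b]$ is Poisson-distributed with mean $\int_a^b \lambda(t)\,\mathrm{d}t$. Since these two properties characterise the inhomogeneous Poisson process, establishing them suffices. Throughout, the hypothesis $\lambda(t) \le \bar\lambda(t)$ guarantees that each retention probability $\lambda(\tau_i)/\bar\lambda(\tau_i)$ lies in $[0,1]$, and continuity of $\lambda$ and $\bar\lambda$ ensures the relevant integrals are finite (adopting the convention that points where $\bar\lambda(t)=0$ carry no mass).

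First I would fix an interval $(a,b] \subseteq (0, t_{\max}]$ and condition on $N$, the number of events of the dominating process with rate $\bar\lambda$ in this interval. By the defining property of that process, $N \sim \mathrm{Poisson}(\Lambda)$ with $\Lambda = \int_a^b \bar\lambda(t)\,\mathrm{d}t$, and given $N=n$ the event locations $\tau_1,\dots,\tau_n$ are independent and identically distributed with density $\bar\lambda(t)/\Lambda$ on $(a,b]$. Each point is retained independently, so the pairs (location, independent coin) are mutually independent across $i$, and after marginalising the location the probability that a single point is retained is
\[
q = \int_a^b \frac{\lambda(t)}{\bar\lambda(t)}\,\frac{\bar\lambda(t)}{\Lambda}\,\mathrm{d}t = \frac{1}{\Lambda}\int_a^b \lambda(t)\,\mathrm{d}t.
\]
Hence the retention indicators are i.i.d.\ $\mathrm{Bernoulli}(q)$, and conditional on $N=n$ the number $M$ of retained points in $(a,b]$ is $\mathrm{Binomial}(n,q)$.

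Next I would combine the two layers of randomness using the standard Poisson--Binomial compounding identity: if $N\sim\mathrm{Poisson}(\Lambda)$ and $M\mid N \sim \mathrm{Binomial}(N,q)$, then $M \sim \mathrm{Poisson}(\Lambda q)$. This is cleanest via probability generating functions, since $\mathbb{E}[s^M] = \mathbb{E}[(1-q+qs)^N] = \exp(\Lambda q(s-1))$, the generating function of the $\mathrm{Poisson}(\Lambda q)$ law. As $\Lambda q = \int_a^b \lambda(t)\,\mathrm{d}t$, this establishes the required marginal count distribution.

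Finally I would address independence across disjoint intervals. The dominating Poisson process has independent counts over disjoint intervals, and the thinning coins are drawn independently at distinct points, so the retained counts over disjoint subintervals are determined by disjoint and independent collections of randomness and are therefore mutually independent. Together with the count distribution above, this shows the retained points form an inhomogeneous Poisson process with rate $\lambda$ on $(0,t_{\max}]$. The one step demanding care is the conditional-density claim that, given $N=n$, the points are i.i.d.\ draws from the normalised intensity; this is the main (though standard) obstacle to make fully rigorous. An alternative that sidesteps it is to mark each point of the dominating process with an independent $\mathrm{Uniform}[0,1]$ variable, invoke the marking and restriction theorems to conclude that the kept pairs $\{(t,u): u \le \lambda(t)/\bar\lambda(t)\}$ form a Poisson process on $(0,t_{\max}]\times[0,1]$, and project onto the time axis to read off the rate $\int_0^1 \bar\lambda(t)\,\mathds{1}[u \le \lambda(t)/\bar\lambda(t)]\,\mathrm{d}u = \lambda(t)$.
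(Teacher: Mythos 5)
The paper does not prove this statement: Theorem~1 is imported by citation from Lewis and Shedler (1979) and used as a black box, so there is no in-paper argument to compare yours against. Judged on its own terms, your proof is correct and is the standard one. The two-layer computation is sound: conditioning on the dominating count $N\sim\mathrm{Poisson}(\Lambda)$, using the fact that given $N=n$ the points are i.i.d.\ with density $\bar\lambda/\Lambda$, marginalising the location out of the retention probability to get $q=\Lambda^{-1}\int_a^b\lambda$, and then applying Poisson--Binomial compounding via the generating function gives exactly $M\sim\mathrm{Poisson}\bigl(\int_a^b\lambda(t)\,\mathrm{d}t\bigr)$. The independence over disjoint intervals follows, as you say, because the restriction of the dominating process to disjoint intervals yields independent point configurations and the coins are attached independently to distinct points. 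You correctly identify the one step that is usually waved through --- the conditional i.i.d.\ representation of the points given the count --- and your fallback via the marking and restriction theorems (thin with an independent $\mathrm{Uniform}[0,1]$ mark, keep the region $u\le\lambda(t)/\bar\lambda(t)$, project) is the cleanest fully rigorous route and computes the projected intensity correctly as $\lambda(t)$. The only cosmetic mismatch is with the paper's statement itself, which garbles the interval notation ($0\leq t_{\max}$, and $[0,t_{\max})$ versus $(0,t_{\max}]$); your proof works on any fixed bounded interval and is unaffected.
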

The efficiency of simulating via thinning depends on how tightly the rate $\bar{\lambda}$ upper-bounds $\lambda$ and how costly it is to simulate from $\bar{\lambda}$. Another key tool for enabling the simulation of event times is known as superposition and the general process is outlined in Theorem 2. 
\begin{theorem}[\cite{Kingman1992}]
Suppose that $\Lambda_1,...,\Lambda_n$ are a set of independent Poisson processes with rates $\lambda_1(t),...,\lambda_n(t)$ with first arrival times $\tau^{[1]},...,\tau^{[n]}$. The Poisson process with rate $\lambda(t) = \sum_{i=1}^n\lambda_i(t)$ may be simulated by returning the first arrival time as $\tau = \min_{i=1,...,n} \tau^{[i]}$. 
\end{theorem}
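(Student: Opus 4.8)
The plan is to work directly with the survival function (equivalently, the void probability) of the first arrival time of each process and to exploit independence. The single fact I would take as the defining property of an inhomogeneous Poisson process $\Lambda_i$ with rate $\lambda_i$ is that its first arrival time satisfies
$$
\mathbb{P}(\tau^{[i]} > t) = \exp\left(-\int_0^t \lambda_i(s)\,\mathrm{d}s\right),
$$
since the event $\{\tau^{[i]} > t\}$ is precisely the event that $\Lambda_i$ has no points in $[0,t]$, which for a Poisson process has this void probability. This is the same characterisation already used to specify the arrival times in Algorithm 1.

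First I would rewrite the event $\{\tau > t\}$ for $\tau = \min_i \tau^{[i]}$ as the intersection $\bigcap_{i=1}^n \{\tau^{[i]} > t\}$, since the minimum exceeds $t$ exactly when every one of the $\tau^{[i]}$ does. Then, invoking the independence of the processes $\Lambda_1,\ldots,\Lambda_n$ (and hence of the arrival times $\tau^{[1]},\ldots,\tau^{[n]}$), I would factor the survival probability as a product and combine the exponentials:
$$
\mathbb{P}(\tau > t) = \prod_{i=1}^n \mathbb{P}(\tau^{[i]} > t) = \prod_{i=1}^n \exp\left(-\int_0^t \lambda_i(s)\,\mathrm{d}s\right) = \exp\left(-\int_0^t \sum_{i=1}^n \lambda_i(s)\,\mathrm{d}s\right) = \exp\left(-\int_0^t \lambda(s)\,\mathrm{d}s\right).
$$
This final expression is exactly the survival function of the first arrival time of a Poisson process with rate $\lambda(t) = \sum_{i=1}^n \lambda_i(t)$, so $\tau$ has the claimed distribution and the result follows.

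The computation itself is routine; the points requiring care are conceptual rather than technical. I would state clearly which characterisation of the Poisson process I am using (the void-probability/survival-function form) so that the exponential factorisation is justified, and emphasise that it is the independence of the processes that licenses replacing the survival probability of the minimum by a product. The one genuine subtlety I would flag is scope: the argument above settles only the distribution of the first arrival time $\tau$, which is all the stated theorem asks for. If one instead wanted the stronger claim that the entire merged point process $\bigcup_i \Lambda_i$ is itself an inhomogeneous Poisson process with rate $\lambda$, the survival-function argument would need to be upgraded to a statement about the joint law of all arrivals — for instance via the probability generating functional, or by verifying independent Poisson-distributed increments on disjoint intervals — but this is not needed for the present statement.
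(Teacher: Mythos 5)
Your argument is correct: the paper states this superposition result without proof, citing \cite{Kingman1992}, so there is no in-paper argument to compare against. Your survival-function computation --- writing $\{\tau>t\}$ as the intersection of the events $\{\tau^{[i]}>t\}$, factoring by independence, and combining the exponentials --- is the standard and complete verification of the stated claim about the first arrival time, and you are right to flag that the stronger assertion (that the merged point process is itself Poisson with rate $\lambda$) would require a separate argument, though it is not needed for the statement as phrased.
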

Both superposition and thinning provide useful tools in the simulation of event times from a non-homogeneous Poisson process. However, constructing the upper-bounds required for thinning is largely an ad hoc and time consuming process for the practitioner. A common desire for a practitioner is to reuse simulation knowledge. For example if an event rate can be written as the sum of several sub-event rates, that can be exactly simulated, ideally this knowledge should be used to simulate from the sum. We refer to this class of event rates as \textit{additive} rates. Specifically, we will refer to a Poisson process with a rate 
$$\lambda(\bm{z}_t) = \max\{0, f_1(t)+f_2(t)\}$$ 
as process with an additive rate. The superposition method from Theorem 2 facilitates additive event-time simulation by thinning using an event rate 
$$\bar{\lambda}(t) =\max\{0, f_1(t)\} + \max\{0, f_2(t)\},$$
which satisfies $\lambda(\bm{z}_t) \leq \bar{\lambda}(t)$. We can simulate the first event from a process with this rate as $\tau = \min(\tau_1, \tau_2)$, where $\tau_1$ is simulated with rate $\bar{\lambda}_1= \max\{0, f_1(t)\}$ and $\tau_2$ with rate $\bar{\lambda}_2 = \max\{0, f_2(t)\}$. This simulated time will be accepted with probability
$$
\frac{\max\{0, f_1(\tau)+f_2(\tau)\}}{ \max\{0, f_1(\tau)) + \max(0, f_2(\tau)\}},
$$
which will be one when both $f_1(\tau)>0$ and $f_2(\tau) >0$. This procedure is useful since it allows the re-use of thinning procedures for $f_1$ and $f_2$. Consequently a practitioner can build up to simulating from a complex rate by combining smaller simpler rates. Thinning via superposition is often recommended in the literature; notable examples include \cite{bouchard2018bouncy} who illustrate this approach when simulating from an exponential family and \cite{wu2020coordinate} who discuss the approach generally. The approach was also recommended broadly by \cite{sen2019efficient} where it was suggested for practical implementation of the Zig-Zag where $f_1$ and $f_2$ correspond to the terms from the likelihood and prior respectively. Once simulation for a choice of prior or likelihood are individually established this knowledge can be reused in future modelling.

\section{Concave-convex adaptive thinning}

Our general proposal for simulating events in a PDMP is based on concave-convex adaptive thinning. As the dynamics between events are deterministic, conditional on the current state of the PDMP, we can re-write the rate of the next event as a function of time. With slight abuse of notation we will denote this rate as $\lambda(t)$ with the argument of the function making it clear whether we are viewing it as a function of the state, or as here, as a function of time. Suppose $\lambda(t)=\max\{0,f(t)\}$ where $f(t)$ can be decomposed as:
\begin{align}
\label{eq:cc}
f(t) = f_u(t) + f_n(t)
\end{align}
on a finite interval $t \in [0, \tau_{\max})$ where $f_u(t)$ is a convex function and $f_n(t)$ is a concave function in time. (We discuss below general conditions where such a decomposition is possible.) The problem of upper-bounding $f(t)$ is recast to finding upper-bounding piecewise linear functions $\ell_u(t)$ and $\ell_n(t)$ such that $f_u(t) \leq \ell_u(t)$ and $f_n(t) \leq \ell_n(t)$. We may then apply the bound $f(t) \leq \ell(t)$ where $\ell(t) = \ell_n(t) + \ell_u(t)$. Since $\ell(t)$ is the sum of piecewise linear functions it will also be a piecewise linear function and direct simulation from a Poisson process with rate $\ell(t)$ is possible (see Appendix A).\\
Concave-convex adaptive thinning proceeds by constructing a piecewise linear function $\ell(t)$ over a set of $m$ abscissae $t_0 = 0 < t_1 < \dots < t_m = \tau_{\max}$ at which the evaluation of $f_u(t_i), f_n(t_i)$ and derivative $f_n'(t_i)$ are known. An event time $\tau$ is simulated from the Poisson process with rate $\bar{\lambda}(t) = \max\{0,\ell(t)\}$ and is accepted with probability 
$$
\frac{\max\{0, f_u(\tau) + f_n(\tau)\}}{\bar{\lambda}(\tau)}.
$$
If the event is rejected, we can reuse the information from the evaluation of $f_u(\tau)$ and $f_n(\tau)$ with an additional evaluation of $f'_n(\tau)$ to refine the simulation on the abscissae $t_0 = \tau < t_k < \dots < t_m = \tau_{\max}$ where $t_{k-1} = \max\{t_i \mid t_i < \tau\}$. If an event does not occur on the range $t\in [0,\tau_{\max})$ the PDMP process is evolved by $\tau_{\max}$ and the thinning process is repeated. We give the general construction of the piecewise linear function $\ell(t)$ for $t \in [t_1, t_2)$ and note this construction may be applied iteratively over the abscissae. This construction is also depicted visually in Figure \ref{fig:cc}.
\begin{figure}[t]
    \centering
    \includegraphics[width = .7\textwidth]{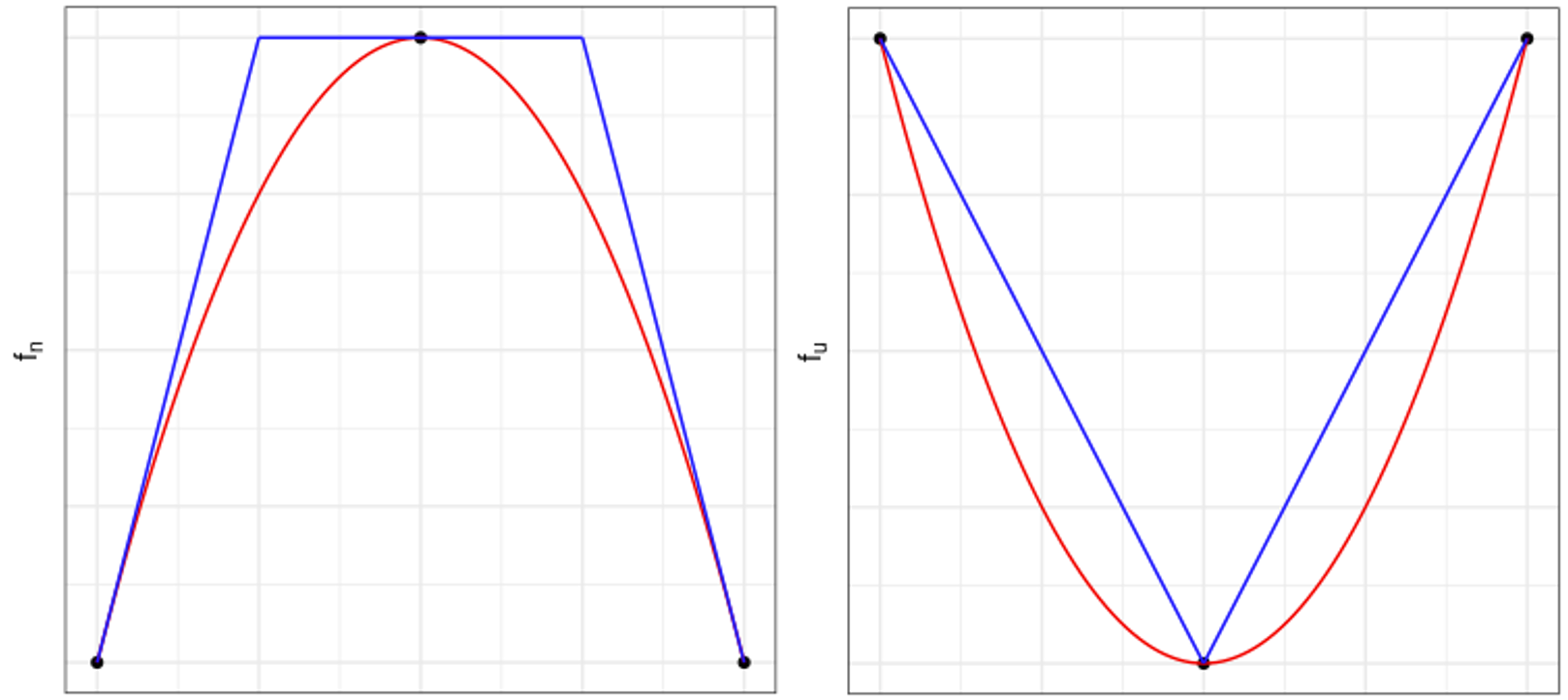}
    \caption{Upper bounds for a rate function based on concave (left) and convex (right) information on three abscissae.  Concave bounds are formed using linear segments from the gradient of $f_n(t)$. Convex bounds are constructed using linear segments connecting evaluations of $f_u(t)$.}
    \label{fig:cc}
\end{figure}
A convex function $f_u(t)$, is by definition a function that can be upper-bounded by the line segment connecting any two function evaluations. So, for any $t \in [t_1,t_2)$ we have the upper-bound $f_u(t)\leq \ell_u(t)$ where
$$
\ell_u(t) = f_u(t_1) + \frac{f_u(t_2) - f_u(t_1)}{t_2 - t_1}t.
$$
For a concave function $f_n(t)$ with derivative $f_n'(t)$ the line segment corresponding to the tangent of $f_n(t)$ will upper-bound the function. Thus $f_n(t)$ will be upper-bounded by
$$
\ell_n(t) = \min\left\{f_n(t_1) + f_n'(t_1)t, f_n(t_2) + f_n'(t_2)(t_2-t) \right\}.
$$
for $t\in[t_1,t_2)$. This minimum will switch at the point of intersection between the lines $f_n(t_1) + f_n'(t_1)t$ and $f_n(t_2) + f_n'(t_2)(t_2-t)$. It is simple to find this intersection point \citep{Gorur2011},
$$
t^* = \frac{f_n(t_2) - f_n'(t_2)t_2 - f_n(t_1) + f_n'(t_1)t_1}{f_n'(t_1) - f_n'(t_2)}
$$
which will be a point on the interval $[t_1,t_2]$ provided the derivatives $f_n'(t_1)$ and $f_n'(t_2)$ are not equal. If $f_n'(t_1) = f_n'(t_2)$, the linear segment will not change over the interval and we take $t^* = t_2$. So we take 
$$
\ell_n(t) = \begin{cases} f_n(t_1) + f_n'(t_1)t & t\in [t_1,t^*)\\
f_n(t_2) + f_n'(t_2)(t_2-t) & t \in [t^*, t_2) \end{cases},
$$
and combining these bounds we have $\ell(t) = \ell_u(t) + \ell_n(t)$
which is a piecewise linear function upper-bounding $f(t) \leq \ell(t)$ for $t\in[t_1,t_2)$.

\subsection{Concave-convex decompositions}

The proposition below gives simple conditions for the class of non-homogeneous Poisson processes that admit thinning using the concave-convex approach. 

\begin{prop}
\label{prop:W}
If a Poisson process with rate function $\lambda(t)$ can be written as ${\lambda(t) = \max\{0, f(t)\}}$ where $f(t)$ is continuous except possibly at a finite number of known points, $0 < t^{\star}_1 < t^{\star}_2 <\cdots t_r^{\star}<\infty$ , then the process admits thinning using the concave-convex adaptive thinning approach. 
\end{prop}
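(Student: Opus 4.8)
The plan is to reduce the statement to the single analytic fact underpinning Section 4: a rate $\max\{0,f(t)\}$ can be handled by the method precisely when, on the interval of interest, $f$ admits a decomposition $f = f_u + f_n$ into a convex part $f_u$ and a concave part $f_n$. Once such a splitting is in hand, the secant bound $\ell_u \ge f_u$ and the tangent-line bound $\ell_n \ge f_n$ already constructed give a piecewise-linear envelope $\ell = \ell_u + \ell_n \ge f$, and Theorem 1 together with exact simulation from piecewise-linear rates (Appendix A) finishes the thinning. So the real work lies in producing the decomposition and in controlling the discontinuities.

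First I would dispose of the discontinuities. Since the $r$ points $t^\star_1 < \cdots < t^\star_r$ are known and finite in number, on any horizon $[0,\tau_{\max})$ I would insert them into the abscissae $t_0 = 0 < t_1 < \cdots < t_m = \tau_{\max}$, so that every elementary interval $[t_i,t_{i+1})$ on which a bound is built lies inside a region where $f$ is continuous. The positive-part operation $\max\{0,\cdot\}$ is handled exactly as in the additive-rate discussion preceding Section 4: it can only lower the rate, so an upper bound on $f$ remains an upper bound on $\max\{0,f\}$ and the acceptance probabilities are well defined. This reduces the claim to the assertion that, on each compact subinterval on which $f$ is continuous, a concave-convex decomposition exists.

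The crux, and the step I expect to be the main obstacle, is establishing this decomposition from the stated regularity. Writing $f = f_u + f_n$ with $f_u$ convex and $f_n$ concave is equivalent to writing $f$ as a difference of two convex functions, so mere continuity is not by itself sufficient in full generality; one genuinely needs $f$ to be difference-of-convex on the interval. I would therefore supply the decomposition under the regularity actually enjoyed by PDMP rates, namely $f \in C^2$ (or piecewise $C^2$), which covers polynomial potentials and smooth potentials composed with the linear dynamics $\bm{\theta}_0 + t\bm{v}_0$. Concretely, on a compact subinterval let $c \ge \max\{0,\sup_t f''(t)\}$ and set $f_u(t) = \tfrac{c}{2}t^2$ and $f_n(t) = f(t) - \tfrac{c}{2}t^2$; then $c \ge 0$ makes $f_u$ convex, while $f_n'' = f'' - c \le 0$ makes $f_n$ concave, giving the required splitting. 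The evaluations $f_u(t_i)$, $f_n(t_i)$ and $f_n'(t_i)$ demanded by the construction are then immediate.

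Finally I would assemble the pieces. On each elementary interval the secant bound on $f_u$ and the min-of-tangents bound on $f_n$ from Section 4 yield the piecewise-linear $\ell \ge f$; Theorem 1 licenses thinning of $\max\{0,f\}$ by $\max\{0,\ell\}$, and simulation from $\ell$ is exact. Iterating across the abscissae, and advancing the horizon past each $t^\star_j$ whenever no event is accepted on $[0,\tau_{\max})$, shows the process on $[0,\infty)$ can be simulated by the concave-convex adaptive thinning approach. The honest caveat I would flag is exactly the difference-of-convex requirement: if one insists only on continuity, a dominating quadratic need not exist, so I would either strengthen the hypothesis to piecewise $C^2$ or argue difference-of-convexity directly for the specific rate at hand.
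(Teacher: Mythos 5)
Your reduction to ``find a concave--convex decomposition on each continuity interval'' is sound as far as it goes, and your quadratic-dominating splitting $f_u(t)=\tfrac{c}{2}t^2$, $f_n=f-\tfrac{c}{2}t^2$ with $c\geq\sup_t f''(t)$ is a perfectly valid decomposition when $f$ is $C^2$ on the compact subinterval. But you have correctly diagnosed, and then not resolved, the gap between that argument and the proposition as stated: the hypothesis is only continuity, and a merely continuous $f$ need not be difference-of-convex, so your construction forces you to strengthen the hypothesis. The idea you are missing is that the method does not require a concave--convex decomposition of $f$ itself; it only requires a concave--convex \emph{upper bound}. The paper's proof works on the interval $[0,\tau_{\max})$ up to the first discontinuity, invokes Stone--Weierstrass to produce a polynomial $g$ with $\|f-g\|_\infty<\epsilon$ on $[0,\tau_{\max}]$, and then thins $\lambda(t)=\max\{0,f(t)\}$ against $\hat{\lambda}(t)=\max\{0,g(t)+\epsilon\}\geq\lambda(t)$. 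A polynomial is always trivially difference-of-convex --- split it by the sign of its coefficients, $g_u(t)=\sum_{a_i>0}a_it^i$ and $g_n(t)=\sum_{a_i<0}a_it^i$, both monomial sums being convex/concave on $t\geq 0$ --- so $\hat{\lambda}$ admits the concave--convex machinery with no smoothness assumption on $f$ whatsoever. The acceptance ratio $\max\{0,f(\tau)\}/\bar{\lambda}(\tau)$ remains valid because $\bar{\lambda}\geq\hat{\lambda}\geq\lambda$.

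Your handling of the discontinuities (insert the $t^\star_j$ into the abscissae, advance the horizon past each one when no event is accepted) matches the paper's in substance, and your assembly of the secant and tangent bounds into the piecewise-linear envelope is the same as Section 4. The comparison is therefore: your route proves a weaker proposition (piecewise $C^2$ rates) with a fully constructive decomposition, which is arguably closer to what one implements in practice; the paper's route proves the proposition as stated under bare continuity, at the cost of an existential appeal to Stone--Weierstrass (the paper then supplies constructive polynomial bounds via Taylor remainders or Lagrange interpolation in the discussion that follows, which is where your $\sup|f''|$-type constants reappear). To repair your proof without changing the hypothesis, replace ``decompose $f$'' with ``upper-bound $f$ by $g+\epsilon$ for a polynomial $g$ and decompose that.''
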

\begin{proof}
Consider $f(t)$ on the interval $[0,\tau_{\max})$ where if $t_1^{\star}$ exists, $\tau_{\max} = t_1^{\star}$ otherwise $\tau_{\max}$ is some arbitrary value $\tau_{\max}>0$. On this interval $f(t)$ is continuous and by the Stone-Weierstrass theorem \citep{Stone1948}, for any $\epsilon>0$ there exists a polynomial $g(t)$ on the interval $[0, \tau_{\max}]$ with $\|f - g\| < \epsilon$. The function $g$ is a polynomial so it admits the concave-convex decomposition
$$
g(t) = \sum_{\{i : a_i> 0\}}a_it^i + \sum_{\{i : a_i < 0\}}a_it^i
$$
with convex function $g_u(t) = \sum_{\{i : a_i> 0\}}a_it^i$ and concave function $g_n(t) = \sum_{\{i : a_i<0\}}a_it^i$. The Process with rate $\lambda(t)$ admits thinning on the interval $[0,\tau_{\max})$ using the Poisson process with rate $\hat{\lambda}(t) = \max\{0, g(t)+\epsilon\}$ which has a convex-concave decomposition. If an event does not occur on the interval $[0,\tau_{\max})$ then the process is considered on the next continuous interval.
\end{proof}

The proof of Proposition \ref{prop:W} relies on finding a polynomial that can upper-bound the process over a finite interval. If a bound can be given for higher order derivatives of $f(t)$ then there are two main approaches for finding an upper-bounding polynomial. Using Taylor's expansion of $f(t)$ about zero
$$
f(t) = f(0) + f'(0)t + \frac{f''(0)}{2!}t^2 + \cdots + \frac{f^{(k)}(t)}{k!}t^k + \int_0^tf^{(k+1)}(s)\frac{t^k}{k!}ds.
$$
If there is a known constant $M$ such that $|f^{(k+1)}(t)| \leq M$ we can employ thinning based on the polynomial bound $g(t) = f(0) + f'(0)t + \frac{f''(0)}{2!}t^2 + \cdots + \frac{f^{(k)}(t)}{k!}t^k + \frac{M}{(k+1)!}t^{(k+1)}$. Alternatively polynomial upper-bounds can be constructed based on interpolation where the error is controlled. For example using Lagrange polynomial interpolation on $[0,\tau_{\max}]$ with $k$ evaluations of the function has error is bounded by $\tau_{\max}^{k+1}\frac{M}{(k+1)!}$. Adding this bound as a constant yields an upper-bounding polynomial without requiring evaluation of the derivatives of $f(t)$ required for the Taylor series expansion. 

\subsection{Concave-convex adaptive thinning and PDMP-based samplers}

We will refer to the process of using concave-convex adaptive thinning for simulating a PDMP-based sampler as CC-PDMP sampling. In this section we illustrate how CC-PDMP sampling is applied when the target distribution corresponds to a Bayesian posterior:
%. Observed data $\bm{y} = (y_1,...,y_n)$ modelled independently with probability density $p(y_i|\bm{\theta})$ for $i=1,...,n$ has a Bayesian posterior distribution
$$
\pi(\bm{\theta}) \propto \exp(-U^{(\ell)}(\bm{\theta}) -U^{(p)}(\bm{\theta}))
$$
where $U^{(\ell)}(\bm{\theta})$ %= -\sum_{i=1}^n\log p(y_i|\bm{\theta})$ 
is the potential of the likelihood and $U^{(p)}(\bm{\theta}) = -\log p_0(\bm{\theta})$ is the potential for the prior distribution $p_0$. Consider the $k$th rate for a fully local PDMP such as the Zig-Zag sampler
$$
\lambda_{k}(t) =\max\left\{0, v_k\partial_{\theta_k}U^{(\ell)}(\bm{\theta} + t\bm{v}) + v_k\partial_{\theta_k}U^{(p)}(\bm{\theta} + t\bm{v})\right\},
$$
here we could apply concave-convex thinning with $f_k(t) = v_k\partial_{\theta_k}U^{(\ell)}(\bm{\theta} + t\bm{v}) + v_k\partial_{\theta_k}U^{(p)}(\bm{\theta} + t\bm{v})$. For Bayesian inference problems the function $f_k$ is the sum $f_k(t) = f_k^1(t) + f_k^2(t)$ of likelihood and prior part where $f_k^1(t) = v_k\partial_{\theta_k}U^{(\ell)}(\bm{\theta} + t\bm{v})$ and $f_k^2(t) = v_k\partial_{\theta_k}U^{(p)}(\bm{\theta} + t\bm{v})$. If $f_k^1$ and $f_k^2$ have concave-convex decompositions then $f_k$ has a concave-convex decomposition. This allows for reuse of bounding knowledge, priors and likelihoods can be swapped provided their decompositions are known. Moreover, to implement this fully local method the rates $\lambda_k(t) = \max\{0,f_k(t)\}$ must have a concave-convex decomposition for $k=1,...,p.$ Thus we can reuse this information to find a decomposition for a global PDMP-based sampler with rate 
$$
\lambda(t) = \max\{0, \langle \bm{v}, \nabla_{\bm{\theta}}U(\bm{\theta}+t\bm{v})\rangle\} = \max\left\{0, \sum_{k=1}^pf_k(t)\right\}.
$$
Terms can be trivially regrouped to facilitate simulation from any choice of local PDMP structure as described in Section 2.5. 

\section{Tuning parameters}

\subsection{Choice of abscissae}

A key user choice in CC-PDMP sampling is that of the position and number of abscissae on the interval $[0,\tau_{\max})$.
Suppose that CC-PDMP simulation is implemented with abscissae $t_0=0<t_1<\cdots<t_n=\tau_{\max}$ if an event does not occur on the interval $[0,\tau_{\max})$ the PDMP will be evolved by $\tau_{\max}$ and the thinning process will repeat where the function evaluations at $\tau_{\max}$ can be reused at $t_0=0$. Thus one simulation with $n$ abscissae would be computationally equivalent to $n$ evolutions of the CC-PDMP approach using abscissae with only two points $t_0=0$ and $t_1 = \tau_{\max}$. This encourages choosing a minimal number of abscissae and more carefully choosing the length of the interval. 

There are two main issues that can occur when tuning the parameter $\tau_{\max}$. If $\tau_{\max}$ is too small, then the proposal frequently lie outside the interval $[0,\tau_{\max})$ and simulating an event will requiring many iterations of bounding the event rate. Whilst if $\tau_{\max}$ is too large then the bound on the rate may be be poor, leading to many events that are rejected. In this article, we refer to an iteration of the CC-PDMP that does not generate an event time (i.e. a rejected proposal event or not generating on the interval) as a \textit{shadow event}. The total number of iterations will be the sum of the number of events and shadow events. \textit{Efficiency} is measured as the proportion of iterations that are events not shadow events, calculated as the ratio of the number of events to the number of iterations.

Adapting the value of $\tau_{\max}$ will not change the sampling dynamics. Consequently, unlike in adaptive MCMC, this parameter may be adapted throughout the course of simulating the PDMP. Ideally this parameter should be a little larger than the average event time so that events are proposed on the interval. A simple automatic approach for choosing this parameter is to set $\tau_{\max}$ equal to the $q$-th percentile of previous simulated event times, and update $\tau_{\max}$ every 100 iterations of the sampler. We found that setting $q=80$th percentile worked well for automatically selecting $\tau_{\max}$.

We investigate the dependency on this tuning parameter by implementing the Zig-Zag sampler on the Banana distribution. The 2-dimensional Banana distribution has potential:
$$
U(\bm{\theta}) = (\theta_1 - 1)^2 + \kappa(\theta_2 - \theta_1^2)^2
$$
where $\kappa$ controls how much mass concentrates around the region $\theta_2 \approx \theta_1^2$.

Since the potential is polynomial in both parameters it is trivial to see that the Zig-Zag rate functions will also be polynomial. Specifically the rates will be polynomial functions of time $t$ of degree 3 and 2 for parameters $\theta_1$ and $\theta_2$. Further implementation details may be found in the supplementary material and associated GitHub. 

Figure \ref{fig:tuneTmax} shows the efficiency of the Zig-Zag sampler on the Banana distribution with $\kappa = 1$ when changing the interval length $\tau_{\max}$ from 0.1 to 4 with two abscissae. The Zig-Zag algorithm was simulated for 10000 events at each value of $\tau_{\max}$. The effect of selecting $\tau_{\max}$ too small or too large is clearly seen and the red line shows the performance using the adaptive approach. While it is possible to get low efficiency if the parameter $\tau_{\max}$ is chosen poorly, we have found either using the adaptive approach or a short pilot run of the sampler to tune $\tau_{\max}$ gives reliable performance. 

\begin{figure}[H]
\centering
\includegraphics[width = 0.9\textwidth]{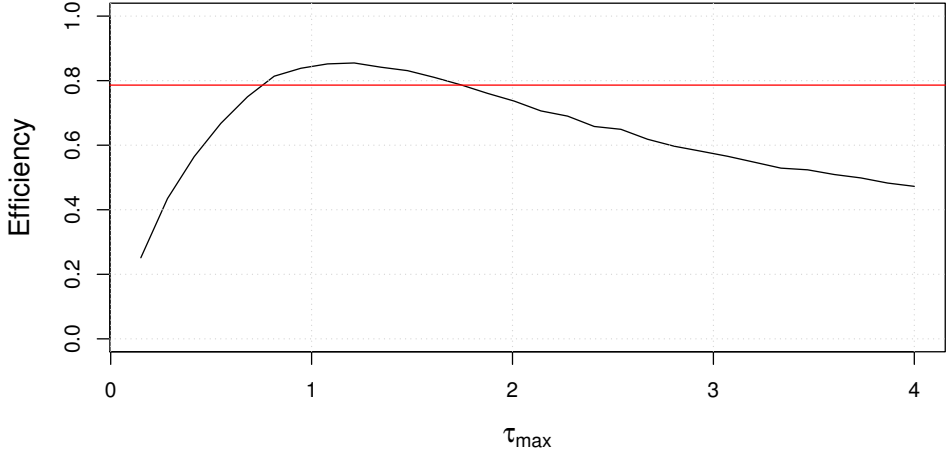}
\caption{Efficiency, defined as the proportion of events that are not shadow events, of Zig-Zag using the CC-PDMP thinning with varying choices for $\tau_{\max}$ on the Banana target. The red line shows the performance of an automatic choice for this parameter.}
\label{fig:tuneTmax}
\end{figure}

\subsection{Choice of decomposition}
The choice of concave-convex decomposition is not unique \citep{Gorur2011}. Arbitrary convex functions can be added to $f_u(t)$ and subtracted from $f_n(t)$ to give a new valid decomposition. A concave-convex decomposition is minimal if there is no non-affine convex function that can be added to $f_n(t)$ and subtracted from $f_u(t)$ while preserving the convexity of the decomposition \citep{Hartman1959}. Consider the function 
$$f(t) = -t^3+3t^2-3t+3,$$
where $f''(t) = -6t+6$. It is simple to see that the function is convex for $t<1$ and concave for $t > 1$. A minimal decomposition is given by the piecewise functions 
$$f_u^1(t) = \begin{cases}-t^3+3t^2-3t+3 & t \leq 1\\ 0 & t > 1 \end{cases} \text{ and } f_n^1(t) =\begin{cases}0 & t \leq 1\\ -t^3+3t^2-3t+3 & t > 1 \end{cases}.$$
\cite{Gorur2011} give a general construction for a minimal concave-convex decomposition. However, this construction relies on finding all points of inflection, which are points where the function changes convexity. Alternatively, Proposition 1 gives a simple decomposition $f_u^2(t) = 3t^2+3$ and $f_n^2(t) = -t^3-3t$. While the decomposition from Proposition 1 is not minimal, it does not require finding all points of inflection. These two decompositions are shown in Figure \ref{fig:tunepoly} using abscissae at 0 and 1.  The optimal decomposition gives a tighter bound, here reducing the bounding rate by approximately 0.33 on average across the interval. However, using this bound comes with additional computation cost of finding inflection points. This has the potential to reduce the overall efficiency of the method. Our experience is that the efficiency gains for using the optimal polynomial are generally not sufficient for it to be beneficial. 
%Over 100000 event time simulations for this example the average thinning acceptance rate was approximately 0.91 for the optimal decomposition and approximately 0.84 for Proposition 1's decomposition. 

\begin{figure}[H]
\centering
\includegraphics[width = 0.95\textwidth]{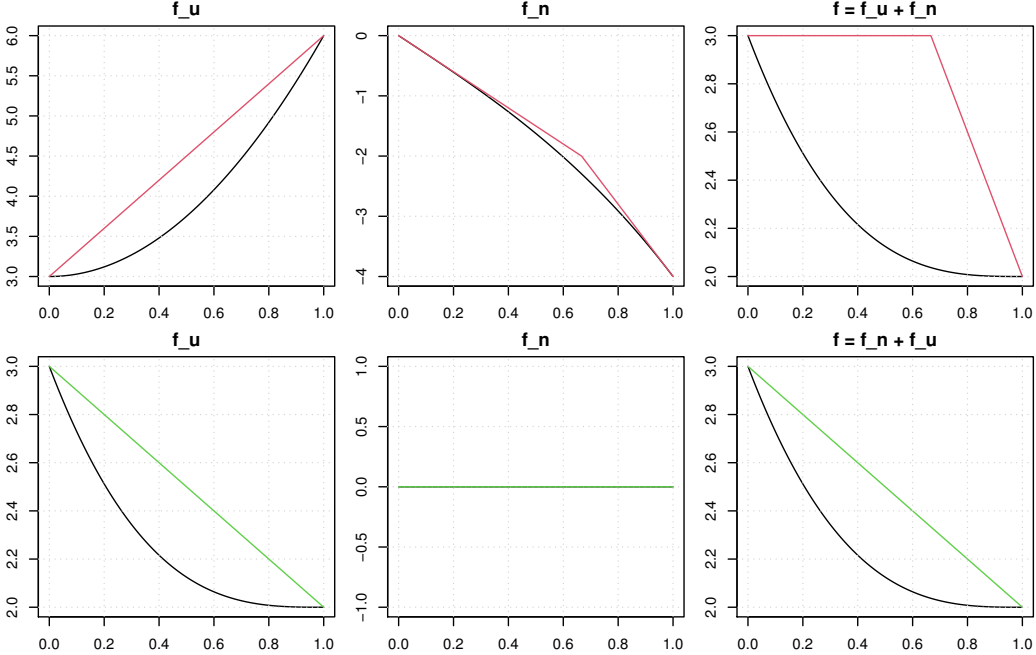}
\caption{The upper-bounds resulting from two different concave-convex decompositions of the function $f(t)=-t^3+3t^2-3t+3$, based on abscissae at 0 and 1. The top row corresponds to the decomposition from Proposition 1 and the bottom row corresponds to an optimal decomposition which recognises $f(t)$ as a convex function on $[0,1)$. The columns show the functions $f_u$, $f_n$ and $f$ as well as their piece-wise linear upper-bounds.}
\label{fig:tunepoly} 
\end{figure}

\section{Experiments}

We now present empirical evaluation of CC-PDMP and comparison with other approaches to simulate PDMPs.
Our experiments were implemented using the R package {\sc ccpdmp} available at  {\texttt{https://github.com/matt-sutton/ccpdmp}}.

The package enables you to simulate a PDMP provided one specifies the concave and convex decomposition of the rate function. If the rate function is polynomial (or bounded by a polynomial) the practitioner may parse this instead and the concave-convex decomposition will be handled internally. The package contains some basic example use cases and code to reproduce the experiments. Code snippets are given in the additional material. 

\label{exampleSection}

\subsection{Application in generalised linear models}
Generalised linear models (GLMs) provide a rich class of models that are frequently used in Bayesian inference. Let $\{(y_i, \bm{x}_i)\}_{i=1}^n$ be the observed data, where $y_i$ is an observed response and $\bm{x}_i\in \mathbb{R}^p$ is a vector of associated covariates for $i=1,\dots, n$. The expected value of $y_i$ is modelled by $g^{-1}(\bm{x}_i^T\bm{\theta})$ where $g^{-1} : \mathbb{R}\rightarrow \mathbb{R}$ is the inverse link function. The potential of the likelihood has the form
$$
U^{(\ell)}(\bm{\theta}) = \sum_{i=1}^n \phi(\bm{x}_i^T\bm{\theta}, y_i)
$$
where $\phi : \mathbb{R}^2 \rightarrow \mathbb{R}$ is the GLM mapping function $\phi(a, y) = \log p(y | g^{-1}(a))$ returning the log likelihood for observation $y$ given $a$. The partial derivative with respect to $\theta_k$ has the form
$$
\partial_{\theta_k} U^{(\ell)}(\bm{\theta}) = \sum_{i=1}^n \phi'(\bm{x}_i^T\bm{\theta}, y_i)x_{ik}
$$
where $\phi'(a, y) = \frac{\partial}{\partial a}\phi(a, y)$ and higher order derivatives are defined similarly. Over the time interval $t \in [0, \tau]$ let $f_k(t) = v_k \partial_{\theta_k} U^{(\ell)}(\bm{\theta} + t\bm{v})$ which is
$$
f_k(t) = v_k\sum_{i=1}^n \phi'(a_i(t), y_i)x_{ik}
$$
where $a_i(t) = \bm{x}_i^T(\bm{\theta} + t\bm{v})$. We can use this to define local rates as $\lambda_k(t) = \max\{0, f_k(t)\}$ for $k = 1,...,p$. For GLMs, repeated application of the chain rule yields:
$$
f_k^{(j)}(t) = v_k\sum_{i=1}^n \phi^{(j+1)}(a_i(t), y_i)\left(\frac{\partial a_i}{\partial t}\right)^{j}x_{ik}
$$
where $\frac{\partial a_i}{\partial t} = \bm{x}_i^T\bm{v}$ and $f_k^{(j)}$ denotes the $j$th derivative of $f_k(t)$. When there are bounds on $\phi^{(j)}$ we can use the upper-bounding Taylor polynomial. In Appendix B we provide bounds for several modelling choices. %This includes two classes of robust regression GLM models, Poisson regression and outlining concave-convex decompositions for priors including; Gaussian Spike-and-Slab, Generalised Inverse Gaussian, Cauchy and Gamma. 

Here we consider a logistic regression GLM which has the mapping function $\phi(a, y) = \log(1 + \exp(a)) - ya$ where $y \in\{0,1\}$. We look at the efficiency of CC-PMDP thinning for a three dimensional logistic regression problem with $n=200$ observations. The covariates were generated from a multivariate normal, $\bm{x}_i \sim \mathcal{N}(\bm{0},V^{-1})$ for $i=1,..,200$, with mean zero, precision matrix 
$$
V = \left(\begin{matrix}
1 & \rho & 0 & 0 & 0\\
\rho & 1 & 0 & 0 & 0\\
0 & 0 & 1 & 0 & 0\\
0 & 0 & 0 & 1 & 0\\
0 & 0 & 0 & 0 & 1\\
\end{matrix}\right)
$$
and data generated using $\theta = (-1.25,0.5,-0.4,-0.4,-0.4)^T$. We take a Gaussian prior $\theta_j \sim \mathcal{N}(0, 1)$ for $j = 1,\dots,5.$ As the correlation is increased the thinning becomes more challenging. We investigate the performance of CC-PDMP for increasing polynomial bounds with increasing $\rho$.

\begin{table}[H]
\centering
\begin{tabular}{cccccccc}
  \hline
   & \multicolumn{7}{c}{Correlation $(\rho)$}\\
Polynomial Order & 0.00 & 0.25 & 0.50 & 0.65 & 0.75 & 0.85 & 0.95 \\ 
\hline
1 & 0.53 & 0.50 & 0.45 & 0.39 & 0.34 & 0.27 & 0.15 \\ 
  2 & 0.80 & 0.80 & 0.79 & 0.78 & 0.76 & 0.71 & 0.46 \\ 
  3 & 0.82 & 0.82 & 0.82 & 0.82 & 0.81 & 0.79 & 0.62 \\ 
   \hline
\end{tabular}
\caption{Efficiency of thinning in Zig-Zag sampling of logistic regression for increasing order of Taylor series polynomial thinning bound. Efficiency is measured as the average proportion of iterations that are events not shadow events over 20 repetitions of the sampler.}
\label{tabRes1}
\end{table}

To date only linear bounds for logistic regression have been used for thinning \citep{bierkens2018piecewise,bouchard2018bouncy}. These are based on the bound $|\phi''(a, y)| \leq 1/4$. Table \ref{tabRes1} shows the efficiency for thinning using polynomials of order 1-3 (using bounds $|\phi''(a, y)| \leq 1/4$, $|\phi'''(a, y)| \leq 1/(6\sqrt{3})$ and $|\phi^{(4)}(a, y)| \leq 1/8$). The linear 1st order bound matches the bound used in \cite{bierkens2018piecewise} for logistic regression using the Zig-Zag sampler. Table \ref{tabRes1} shows that higher order polynomial thinning facilitated through the CC-PDMP allows more efficient thinning. For the linear bounds (Polynomial order 1) we used a fixed $\tau_{\max} = 1$ as linear proposals are exactly simulated using the concave-convex approach. For other polynomial orders we used the adaptive procedure described in the tuning section to select $\tau_{\max}$. 

\subsection{Comparison to thinning via superposition}

In this example we demonstrate the advantages of CC-PDMP when the event rate is additive. In particular, we compare thinning using the CC-PDMP approach with thinning using superposition as outlined in Section 3. Let $\{y_i\}_{i=1}^n$ be the observed data with the following model
\begin{align*}
y_i \mid \theta_i &\sim \text{Poisson}(\exp(\theta_i)), \\
\theta_i &\sim \mathcal{N}(0, 1)
\end{align*}
independently for $i = 1...,n$. The derivative of the potential is $\partial_{\theta_k} U(\bm{\theta}) = \theta_k - y_k + \exp(\theta_k)$ and
$$
f_k(t) = v_k\partial_{\theta_k} U(\bm{\theta} + t\bm{v}) = v_k(\theta_k + v_kt) - y_kv_k + v_k\exp(\theta_k + v_kt).
$$
Which has the concave-convex decomposition $f_u(t) = f_k(t)$ when $v_k >0$ and $f_u(t) = v_k(\theta_k + v_kt) - y_kv_k$, $f_n(t) = v_k\exp(\theta_k + v_kt)$ when $v_k < 0$. The global Bouncy Particle Sampler rate can be defined as 
$$\lambda(t) = \max\left\{0, \sum_{k=1}^n v_k\partial_{\theta_k} U(\bm{\theta} + t\bm{v})\right\} = \max\left\{0, \sum_{k=1}^n f_k(t)\right\},$$
which has concave-convex decomposition defined by the decompositions of the individual $f_k$ functions. In contrast, thinning via superposition involves simulating an event time for the linear and exponential terms of the rate individually. The proposed event time is the minimum of all simulated times. This approach to simulating event times for exponential likelihood and Poisson-Gaussian Markov random fields has been previously used in implementing PDMPs \cite{bouchard2018bouncy, vanetti2017piecewise}. 

\begin{figure}[H]
\centering
\includegraphics[width = 0.9\textwidth]{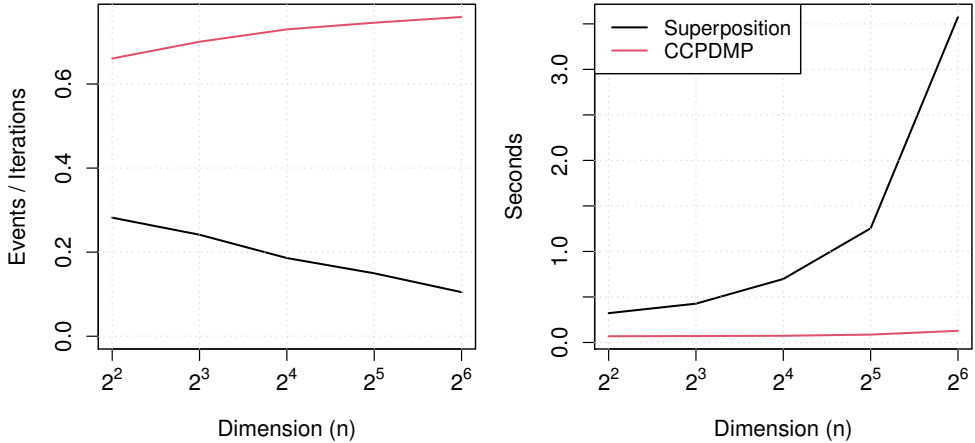}
\caption{Efficiency of thinning for the global BPS using thinning via super-position and CC-PDMP thinning with increasing dimension. Efficiency of the thinning (left) and total computation time (right) is averaged over 20 repeated runs. The Bouncy Particle Sampler samplers were simulated for 1000 event times on each run.}
\label{fig:supcompare}
\end{figure}

Figure \ref{fig:supcompare} compares the thinning and overall computational efficiency for the Bouncy Particle Sampler applied to the Poisson likelihood problem. In the CC-PDMP approach we used the adaptive update for $\tau_{\max}$ as described in the tuning section. As the dimension increases we see the proportion of iterations that are events using superposition drops quickly, consequently the overall computation time for this approach scales poorly. The poor performance of the superposition approach with increasing dimension is the result of the large number of exponential terms, $v_k\exp(\theta_k + v_kt)$, in the event rate. The thinning acceptance rate for this proposal is
$$
\frac{\max\left\{0, \sum_{k=1}^n v_k(\theta_k + v_kt - y_k + \exp(\theta_k + v_kt))\right\}}{\max\left\{0, \sum_{k=1}^n(v_k(\theta_k + v_kt) - y_kv_k) \right\} + \sum_{k=1}^n\max\left\{0,v_k\exp(\theta_k + v_kt) \right\}}
$$
When $v_k<0$ these exponential terms contribute zero to the denominator of the superposition approach. In comparison the CC-PDMP approach uses a linear upper-bound on these exponential terms which can be negative and reduce the denominator term, leading to more efficient thinning proposals. This is seen in Figure \ref{fig:supcompare} where the thinning efficiency remains roughly constant with increasing dimension. 

\subsection{Local Methods}
%In this section we show how our method aids implementation of local PDMP methods. 
Local PDMP methods take advantage of the conditional independence between parameters. Consider the following extension to the previous Poisson example  
\begin{align*}
y_i \mid \theta_i &\sim \text{Poisson}(\exp(\theta_i)) \qquad \text{independently for }i=1,\dots, n \\
\theta_1 &\sim \mathcal{N}(0, 1/(1 - \rho^2))\\
\theta_i \mid \theta_{i-1} &\sim \mathcal{N}(\rho\theta_{i-1}, 1), \qquad \text{for }i=2,\dots, n
\end{align*}
where we fix $\rho = 0.5$. The prior on $\bm{\theta}$ corresponds to an AR(1) process. 
%\begin{align*}
%y_i \mid \theta_i &\sim \text{Poisson}(\exp(\theta_i)) \qquad \text{independent for }i=1,\dots, n \\
%\bm{\theta} &\sim \mathcal{N}(\bm{0}_d, \bm{Q}^{-1})
%\end{align*}
%where $\bm{Q}$ a sparse tridagonal matrix corresponding to an AR(1) process
%\begin{align*}
%\bm{Q} = \left(\begin{matrix}
%1 & \rho & 0 & 0 & \cdots & 0\\
%\rho & 1+\rho & \rho & 0 &\cdots & 0\\
% & &\vdots & & & \\
%0 & \cdots & 0 & \rho & 1+\rho & \rho\\
%0 & \cdots & 0 & 0 & \rho & 1
%\end{matrix}\right)
%\end{align*}
The partial derivative is
$$
\partial_{\theta_k} U(\bm{\theta}) = \begin{cases} (1 + \rho^2)\theta_k - \rho\theta_{k+1} - y_k + \exp(\theta_k) & k = 1\\
(1 + \rho^2)\theta_k - \rho\theta_{k-1} - \rho\theta_{k-1} - y_k + \exp(\theta_k) & 1<k<n\\ 
\theta_k - \rho\theta_{k-1} - y_k + \exp(\theta_k) & k = n
\end{cases}
$$
which has the same linear and exponential form as the rate in Section 5.2 so we can use an analogous concave-convex decomposition for this rate. In this section we consider local PDMP implementations. The CC-PDMP implementation facilitates simple construction of thinning bounds for local PDMP factorisations. For the partition $\mathcal{S} = \{S_1,\dots, S_F\}$ the local rate for factor $f$ is
$$
\lambda_f(\bm{z}_t) = \max\left\{0, \sum_{k \in S_f} v_k\partial_{\theta_k}U(\bm{\theta}_t)\right\}= \max\left\{0, \sum_{k \in S_f} f_k(t)\right\},
$$
for $f = 1, \dots, F$. In this section we consider a number of local decompositions of the form $\mathcal{S}^{(j)} = \{S_1,\dots, S_{F}\}$ where $S_1 = \{1,2,...,j\}$, $S_2 = \{j+1,j+2,..,2j\}$, $\dots, S_{F} = \{p-j,\dots,p-1,p\}$. If $p\neq Fj$ then the final factor will have fewer elements than the previous factors. For these decompositions the conditional independence between parameters gives the following neighbours $\mathcal{N} = \{N_1, \dots, N_F\}$ where 
$$N_f =\begin{cases} \{f, f+1\} & f = 1\\ \{f-1, f, f+1\}& 1 < f < F\\ \{f-1, f\}& f = F\end{cases}.$$
This local decomposition offers computational advantages since updating a single rate requires resimulation of three additional rates regardless of the dimension of the problem. However, this computational efficiency comes at a loss of statistical efficiency. In particular, a global PDMP will be able to move further in stochastic time than local methods before requiring a new event simulation as the global rate function will lower bound the local rate. The overall efficiency of the PDMP method should therefore be considered in terms of both its computational and statistical efficiency. 

In Figure \ref{fig:local} we investigate the scaling performance of local methods in comparison with alternative state-of-the-art MCMC methods. In particular, we compare local PDMP methods with well-tuned implementations of Metropolis adjusted Langevin algorithm (MALA) and Hamiltonian Monte Carlo (HMC). MALA is tuned to obtain acceptance rate approximately equal to 0.5 and HMC tuned to scale with acceptance probability approximately equal to 0.6. For MALA this involves scaling the variance in the proposal and for HMC this involves adjusting the number of leapfrog steps per iteration. These implementations are in line with theoretical scaling results when the parameters of the distribution are all independent. 
\begin{figure}[t]
\centering
\includegraphics[width=\textwidth]{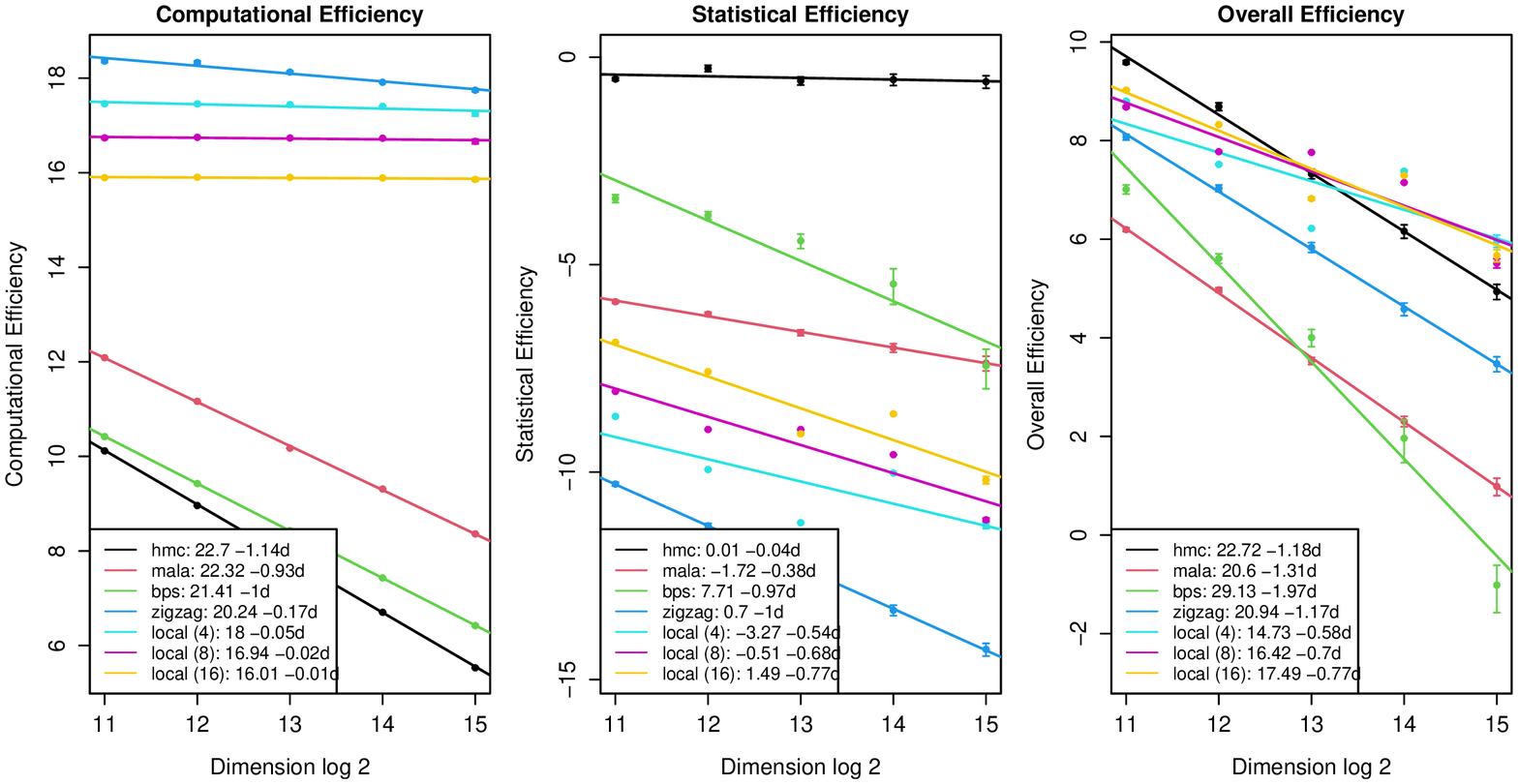}
\caption{Log-log breakdown of computational, statistical and overall empirical efficiency scaling with dimension. The ESS values are calculated with respect to the first coordinate $\theta_1$ using the coda R package on a discretised trajectory of the PDMPs. Plotted are the average rates and error bars of all methods calculated from 50 repeated runs of the methods. The legend shows the slope and intercept fitted for each method giving empirical evidence for scaling rates.}
\label{fig:local}
\end{figure}

As expected, the computational efficiency for the local methods remains (approximately) constant with increasing dimension. Based on the fitted models shown alongside the legends, it appears that computational cost for both MALA and Bouncy Particle Sampler scales approximately as $O(d^{-1})$. The computational cost for HMC scales at a rate greater than $O(d^{-1})$ due to the increase in the number of leapfrog iterations. The statistical efficiency for the local methods improves with larger local factors and is highest for the Global Bouncy Particle Sampler. The statistical efficiency for MALA drops at a rate roughly equal to $O(d^{-1/3})$. Overall efficiency can be seen as the sum of the log computational and log statistical efficiencies. The Zig-Zag and local Bouncy Particle Sampler methods attain the best overall efficiency scaling rates around $O(d)$ or better. It is clear that there is a trade-off between the statistical efficiency of larger local factors and the increased computational cost. This decomposition can be thought of as an additional choice for PDMP implementation that can easily be tuned using the CC-PDMP approach for thinning. Despite a poorer scaling with dimension, HMC remains competitive with local methods to a very high dimension. 

\section{Conclusion}

PDMP-based samplers have shown advantages over traditional MCMC samplers, but their use has been limited by the perceived difficulty in simulating the PDMP dynamics. We have introduced CC-PDMP as a general approach that can simulate the PDMP provided we can specify a concave-convex decomposition of the event rate. This method has broad applicability, enables simple implementation of local PDMP methods, and empirically outperforms alternative simulation approaches. Additional generalisations of the CC-PDMP approach are also possible by making use of other ideas for adaptive rejection sampling. For example, to bound the concave term, $f_n(t)$, of the rate function we require that the derivative $f_n'(t)$ is known on the interval $t\in[0,\tau_{\max})$. However, we could instead use a looser bound that does not require this derivative information based on the adaptive rejection bounds proposed in \cite{Gilks1992gf}. If the rate function is particularly computationally intensive, a lower bound based on the abscissae where $f_u, f_n$ and $f'_n$ and $f_u'$ are evaluated may be used to perform early rejection in the thinning. Additional generalisations may also be found in the difference of convex functions programming literature \citep{Le_Thi2018}. A useful approach to finding a decomposition was to construct a polynomial approximation that upper-bounds the rate. An approximate version of our algorithm could also be implemented where this polynomial approximation is estimated via interpolation. This would allow thinning without having to find the concave-convex decomposition. However, the overall algorithm would be biased unless error in the interpolation is added to the polynomial rate. 

Finally while our CC-PDMP approach has been illustrated only on PDMPs with linear dynamics the approach could be used more generally on PDMPs with nonlinear dynamics, such as the Boomerang sampler of \cite{bierkens2020boomerang}. The only requirement is that the rate function can be bounded by a function with a concave-convex decomposition. 
%The necessity for thinning procedures that can simulate PDMPs with complex dynamics is apparent when observing the statistical efficiency of Figure 3. The performance of Zig-Zag and other local methods becomes driven by how far the PDMP can travel before an event is simulated - i.e. the statistical efficiency per event. Alternative PDMP approaches such as the Boomerang sampler \citep{bierkens2020boomerang} offer future directions in this vein whereby more efficient sampling is enabled by more complex deterministic dynamics. We believe that the CC-PDMP approach may aid in facilitating implementation and investigation of these more complex PDMP-based samplers.

% Something about CC-PDMP useful in comparing PDMP dynamics in a fair maner?

\bibliographystyle{royal}
\bibliography{ref}

\pagebreak
\appendix

\section{Supplementary Material}

\subsection{Simulating from a piecewise linear rate}
\label{subsec:linrate}
Suppose the rate function $\lambda(t) = \max\{0, \ell(t)\}$ is piecewise linear on times $t_0, t_1, \dots, t_n$ with  
\begin{align}
\label{linpiece}
\ell(t) = \begin{cases}a_0 + b_0t & 0<t<t_1\\a_1 + b_1t & t_1<t<t_2\\ \dots& \dots\\a_{n-1} + b_{n-1}t & t_{n-1}<t<t_n  \end{cases}
\end{align}
The event time can be simulated using the following process:
\begin{tcolorbox}
Simulate $u\sim $Uniform[0,1]\\
Set $E = -\log(u)$\\
For $k =0,...,n-1$:\\
1. If $\int_{t_k}^{t_{k+1}} \max\{0, a_k + b_ks\} ds > E$, then return $\tau$ solving:
 $$
 \tau =  \text{argmin}_t\left\{E =\int_{t_k}^t \max(0, a_k + b_ks) ds \right\}.
 $$
2. If $\int_{t_k}^{t_{k+1}} \max\{0, a_k + b_ks\} ds \leq E$, then update $E$ as
$$
E = E - \int_{t_k}^{t_{k+1}} \max\{0, a_k + b_ks\} ds
$$
\textbf{Return }
\end{tcolorbox}

\section{Alternative GLM likelihood and prior specifications}
The likelihood potential for a GLM posterior has the form
$$
U^{(\ell)}(\bm{\theta}) = \sum_{i=1}^n \phi(\bm{x}_i^T\bm{\theta}, y_i)
$$
where $\phi : \mathbb{R}^2 \rightarrow \mathbb{R}$ is the GLM mapping function $\phi(a, y) = \log(y | g^{-1}(a))$. As described in Section 5.1 polynomial rates can be constructed if the following functions can be evaluated and bounded
$$
f_k^{(j)}(t) = v_k\sum_{i=1}^n \phi^{(j+1)}(a_i(t), y_i)\left(\frac{\partial a_i}{\partial t}\right)^{j}x_{ik}.
$$
Functions and derivatives can be found using mathematical software such as Mathematica. 
\subsection{Robust regression}
For robust regression of a response $y\in \mathbb{R}$ we could model the error with wide tails using the \textbf{Cauchy likelihood:}
$$
\phi(a, y) = -\log\left(1 + \frac{(a-y)^2}{b}\right)
$$
with scale $b > 0$. The derivatives are $\phi'(a, y) = -2\frac{a-y}{b + (a-y)^2}$, $\phi''(a, y) = 2\frac{(a-y)^2 - b}{(b + (a-y)^2)^2}$ and $\phi'''(a, y) = -4\frac{(a-y)((a-y)^3 - 3b)}{(b + (a-y)^2)^3}$ with upper-bounds $|\phi'(a, y)|\leq \frac{1}{\sqrt{b}}$, $|\phi''(a, y)|\leq \frac{2}{b}$. For $b \geq 1$ $|\phi'''(a, y)|\leq 3$ (though better bounds are available). For $b \leq 1$, the gradient can be bounded but closed form solutions for this bound are not available. \\
\textbf{Mixture of normals:}
$$
\phi(a, y) = -\log\left( 0.5\mathcal{N}(y-a; 0, 1) +0.5\mathcal{N}(y-a; 0, 10^2) \right),
$$
upper-bounds on the derivatives for this function are $|\phi'(a, y)|\leq 2$, $|\phi''(a, y)|\leq 0.91$ and $|\phi''(a, y)|\leq 1.8$. 
\subsection{Poisson regression}
For $y \sim $Poisson($\exp(a)$) the mapping function is,
$$
\phi(a, y) = -\log\left(\exp(ya - \exp(a)) \right).
$$
Let $a^* = \max\{a(0), a(\tau_{\max})\}$ where $a(t) = \bm{x}^T(\bm{\theta} + t\bm{v})$. The derivatives for the mapping function are $\phi^{(j)}(a, y) \leq \exp(a)$ and upper-bounds on the interval $\tau \in [0, \tau_{\max}]$ are $\phi^{(j)}(a, y) \leq \exp(a^*)$. 

\subsection{Gaussian Spike and Slab Prior}
Following \cite{Chevallier2020} and \cite{Bierkens2021}, the spike and slab prior can be simulated using PDMP dynamics. The Gaussian Spike and Slab prior has the form $\theta_j \sim w_j\mathcal{N}(0, \sigma) + (1-w_j)\delta_0$ for $j = 1,...,p$ where $w_j$ is the prior probability of inclusion and $\delta$ is the Dirac spike at zero. 

For RJPDMPs there are three types of events that occur; a reversible jump (RJ) move if a variable hits the axis ($\theta_j$ =0), events according to the likelihood and prior $\mathcal{N}(0, \sigma)$ for $\theta_j$ non-zero, or the RJ move to reintroduce a variable. The rate to reintroduce a variable is calculated in \cite{Chevallier2020} where it is found to be
$$
\frac{p_{rj}}{\sqrt{2\pi\sigma^2}}\frac{w}{(1-w)}
$$
where $p_{rj}$ is a tuning parameter for the probability of jumping to the reduced model when hitting the axis in \cite{Chevallier2020}. The time to hit an axis for parameter $\theta_j$ will be $-\theta_j/v_j$ when $\theta_jv_j > 0$ and $\infty$ otherwise. Finally simulating non RJ events is equivalent to simulating from the Gaussian slab and thus the part of the event rate contributed by the prior on parameter $\theta_j$ is $v_j(\theta_j + v_jt)/\sigma^2$ for $t\in [0,\tau_{\max}]$. This is linear in time so will be exactly simulated by the CC-PDMP approach.

\subsection{Cauchy Prior}
A Cauchy prior on $\theta$ with scale $b$ and mean $m$ has the form $p(\theta) \propto \exp(-U(\theta))$ where 
$$
U(\theta) = \log\left(1 + \frac{(\theta - m)^2}{b}\right)
$$
the term contributed to the rate depends on
$$
v\partial_{\theta}U(\theta) = v\frac{2(m-\theta)}{b + (m-\theta)^2}
$$
where $v$ is the velocity corresponding to $\theta$. Following the bounds from the Cauchy likelihood (B.1) we have 
$$
v\partial_{\theta}U(\theta + tv) \leq f(0) + f'(0)t + \frac{M}{2!}t^2
$$
where $f(0) = v\frac{2(m-\theta)}{b + (m-\theta)^2}$, $f'(0) = 2v\frac{(\theta-m)^2 - b}{(b + (\theta-m)^2)^2}$ and $M = 3$. 

\subsection{PDMP-based samplers on restricted domains}
PDMP-based samplers can be implemented on a restricted domain following \cite{bierkens2018piecewise} or when the density is only piece-wise smooth following \cite{Chevallier2021}. We consider some PDMP samplers on the restricted domain $\theta > 0$. Simulation involves tracking the time to hitting the boundary $\tau_b = \text{inf}\{t > 0 : \theta + tv = 0\}$. If an event time occurs before the boundary is hit the sampler evolves with the usual rate. If the boundary is hit before the switching event, the sampler is reflected off of the boundary by sampling a new velocity from a probability measure concentrated on directions that are normal to the boundary. We refer the reader to \cite{bierkens2018piecewise} for details. 

\subsection{Generalised inverse Gaussian prior}
The generalised inverse Gaussian prior has the form $p(\theta) \propto \exp(-U(\theta))$ where 
$$
U(\theta) = \theta + \theta^{-1} + 2\log(\theta)
$$
and $\theta >0$. The term contributed to the rate depends on
$$
v\partial_{\theta}U(\theta) = v - v\theta^{-2} + 2\frac{v}{\theta}.
$$
When $v > 0$ we have decomposition $f_u(t) = v(1 + \frac{2}{\theta + vt})$, $f_n(t) = -\frac{v}{(\theta + tv)^2}$ and $f_n'(t) = v^2\frac{2}{(\theta + vt)^3}$. \\
When $v < 0$ we have decomposition $f_u(t) = v(1 + \frac{1}{(\theta + vt)^2})$, $f_n(t) = \frac{v}{(\theta + tv)}$ and $f_n'(t) = -v^2\frac{2}{(\theta + vt)^2}$.

\subsection{Gamma prior}
The Gamma prior has the form $p(\theta) \propto \exp(-U(\theta))$ where 
$$
U(\theta) = \beta\theta - (\alpha-1)\log(\theta)
$$
and $\theta >0$ for hyper-parameters $\alpha,\beta>0$. The term contributed to the rate depends on
$$
v\partial_{\theta}U(\theta) = v\beta - v\frac{(\alpha-1)}{\theta}.
$$
When $v(\alpha -1) > 0$ we have decomposition $f_u(t) = v(\beta + \frac{(\alpha-1)}{\theta + vt})$, $f_n(t) = 0$. \\
When $v(\alpha -1) < 0$ we have decomposition $f_u(t) = v\beta$, $f_n(t) = v\frac{(\alpha-1)}{\theta + vt}$ and $f_n'(t) = -v^2\frac{(\alpha-1)}{(\theta + vt)^2}$.

\section{Comparison of CC-PDMP and CC-ARS}
We note a quick comparison between CC-PDMP and sampling directly with CC-ARS where computational cost is measured by the number of proposals used in the adaptive thinning (CC-PDMP) or adaptive sampling (CC-ARS). Both sampling methods are implemented on a generalised inverse Gaussian distribution (GIG). The unnormalised GIG density function is
$$
\pi(x) \propto \exp\left(x + x^{-1} + 2\log(x)\right),
$$
which is log concave for $2\log(x)$ and log convex for $x+x^{-1}$, see section B.6 for details on the CC-PDMP. We favour the performance of CC-ARS in the 1-dimensional setting since the method yields independent samples while CC-PDMP has correlated samples. Event times in the PDMP-based method give a continuous trajectory where samples can be taken at a consistent times along the trajectory. Rows 1 and 2 of Figure \ref{fig:arscompare} show the methods have similar performance in exploring the density. Jumping from row 2 to 3 we can see that the PDMP sampler only needs to go into the tail once to give a good approximation to the tail of the density. In higher dimensional targets CC-ARS must be used within Gibbs sampling where the efficiency of the sampler will be reduced.

\begin{figure}[H]
\centering
\includegraphics[width=.8\textwidth]{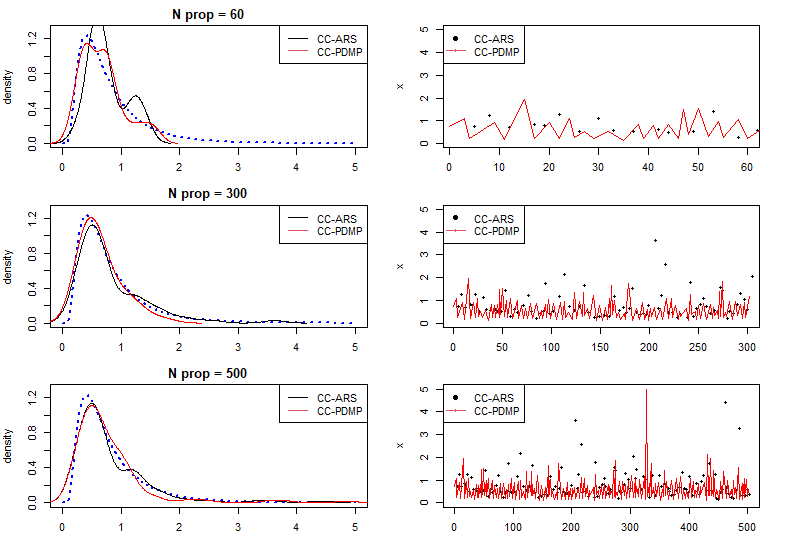}
\caption{Comparison of typical sampling dynamics for PDMP and ARS sampling using concave-convex upper-bounding in a 1-dimensional GIG distribution. Each row shows the results for an increasing number of proposals using in the sampling procedure. Continuous lines are shown for the PDMP dynamics and points for the ARS method.}
\label{fig:arscompare}
\end{figure}

%\subsection{Non-identifiable Logistic Regression}
%
%A short example of efficient thinning for the curved non-identifiable example from Joris's ZigZag paper. The CC method should be efficient here since it can propose points using a 3rd order poly which would be more suited for the trajectory than the usual linear bounding process. 
%
%\begin{figure}[H]
%\centering
%\includegraphics[width=.9\textwidth]{KSDapprox.png}
%\caption{This is a temporary plot showing the performance of Local methods when increasing the dimension in a sparse Poisson regression problem.}
%\label{fig:local}
%\end{figure}
%\subsection{Subsampling}
%Extensions to subsampling with the method in CCPDMP and how this is simple...

\end{document}